\documentclass[reqno,11pt,a4paper]{amsart}

\usepackage{pgfplots}
\usepackage{caption}
\usepackage{subcaption}

\usepackage{amsfonts,amsmath,amssymb}
\usepackage[latin1]{inputenc}
\usepackage{dsfont}
\usepackage{enumerate}
\usepackage{xcolor}
\usepackage[all]{xy}

\usepackage{hyphenat}

\usepackage{enumitem}
\usepackage{mathtools}

\usepackage{graphicx}

\usepackage{multirow}
\usepackage{array}

\usepackage{amssymb}
\usepackage{amsfonts}
\usepackage{mathrsfs}
\usepackage{hyperref}

\newcommand{\R}{\mathds R}

\newtheorem{thm}{Theorem}[section]
\newtheorem{prop}[thm]{Proposition}
\newtheorem{lemma}[thm]{Lemma}

\theoremstyle{definition}

\newtheorem{rem}[thm]{Remark}

\newcommand{\ben}{\begin{enumerate}}
	\newcommand{\een}{\end{enumerate}}
\newcommand{\bit}{\begin{itemize}}
	\newcommand{\eit}{\end{itemize}}
\newcommand{\edoc}{\end{document}}

\DeclareMathOperator*{\essinf}{ess\,inf}

\title[On the formulations of the Fermat principle]{On the formulations of the Fermat principle in general relativity and beyond}
\author[E. Caponio]{Erasmo Caponio}\address{Dipartimento di Meccanica, Matematica e Management\hfill\break\indent
		Politecnico di Bari \hfill\break\indent Via Orabona 4,
		70125, Bari, Italy}
	
\email{erasmo.caponio@poliba.it}

\author[M. A. Javaloyes]{Miguel Angel Javaloyes}\address{Departamento de Matem\'aticas \hfill\break\indent Universidad de Murcia \hfill\break\indent Campus de Espinardo, 30100 Espinardo, Murcia, Spain} \email{majava@um.es}

\begin{document}
\begin{abstract}
	This paper presents a survey of the Fermat principle within the framework of general relativity, tracing its evolution from classical optics to its modern variational formulation in Lorentzian geometry. In particular, we provide its proof in the framework of smooth lightlike curves. We also analyze the mathematical difficulties inherent in the relativistic setting, specifically demonstrating that the space of lightlike curves in the Sobolev topology does not admit a $C^1$-manifold structure due to the cone nature of the null condition. To address these variational obstacles, we discuss alternative frameworks highlighting the role of the quadratic arrival time functional in establishing multiplicity results for light rays. Furthermore, we explore significant extensions of the principle, such as its application to extended sources and receivers, arbitrary arrival curves,  timelike geodesics with prescribed proper time,    Finsler spacetimes, or settings with a non-continuous interface giving rise to a Snell law.
	\end{abstract}
	
\maketitle

\begin{center}
\begin{minipage}{0.8\textwidth}
{\itshape
\noindent In celebration of Paolo Piccione's 60th birthday, this work is dedicated to him as a token of gratitude for his inspiring mentorship and friendship over the years}
\end{minipage}
\end{center}

\bigskip
\section{Introduction}
The Fermat principle, a foundational concept in optics, states that light travels between two points along the path of least time. This principle, initially formulated for Euclidean space, finds a profound and elegant extension within the framework of general relativity (GR). In GR, the presence of mass and energy warps spacetime, and this curvature dictates the path of light. The  principle states that light rays, alias lightlike geodesics, follow the path that makes stationary  the proper time of an observer or, equivalently,  the one of a light source. Its history  traces back to the early days of GR when scientists sought to understand how gravity bends light, a phenomenon famously confirmed by Arthur Eddington's 1919 eclipse expedition.

The application of the Fermat principle in GR is crucial for understanding several astrophysical phenomena. A primary example is {\em gravitational lensing}, where a massive object (like a galaxy, a cluster of galaxies, a black hole) acts as a lens, bending the light from a distant source. 
In the weak field, thin-lens approximation, the lensing problem can be encoded by  the so-called Fermat potential on the lens plane; the observable images are precisely its stationary points, while degenerate critical points produce caustics and govern the change of image multiplicities and parities 	\cite{BN86}.  From the spacetime perspective, however, this should be regarded as an approximate realization of the exact general-relativistic Fermat principle, where light rays are characterized as lightlike geodesics making stationary  the proper arrival time of the observer.

The manuscript is organized as follows. We begin with a historical overview of the Fermat principle, from its classical optical origins to its relativistic formulation. Next, we present the variational framework for lightlike curves, discussing the construction of admissible variations and the reduction of the relativistic problem to a more classical variational setting. We then analyze the regularity issues of the corresponding path spaces, emphasizing the obstruction to a Sobolev manifold structure and the role of smoother categories. After that, we explain how these difficulties can be bypassed in relevant geometric settings by means of suitable reductions and quadratic functionals. Finally, we review several extensions of the principle, including the cases of extended sources and receivers, timelike geodesics with prescribed proper time, arbitrary arrival curves, Finsler spacetimes, and discontinuous interfaces leading to Snell-type laws.

\subsection{History of the Fermat principle}
\subsubsection{Classical Fermat principle}
The behaviour of light has been studied since the ancient times. Euclid already established that light travels in a straight line around 300 BC, while Hero of Alexandria,  1st or 2nd century AD, deduced the law of reflection from a principle of minimum distance. The law of refraction was not understood so early. Though Ptolemy found an approximation valid for small angles, it was the Persian scientist Ibn Sahl at Bagdad court in 984 AD who first discovered the law in his study of  lenses  focusing light. But the work by Ibn Sahl was forgotten, and some years later, Ibn al-Haytam (Alhazen), probably the first true scientist of history,  almost rediscovered the law,  establishing  that  the normal to the surface,  the incident  ray  and the refracting  ray   lie in the same plane. 

Later the refraction law was rediscovered several times. First in 1602 by Thomas Harriot, who did not publish his results but informed Kepler by correspondence. Then by the Dutch astronomer Willebrord Snellius, who  also did not publish his result during his lifetime, and later by Ren\'e Descartes in 1637 in his essay Dioptrique. Some years later, in 1657, Pierre de Fermat received from Marin Cureau de la Chambre a copy of his new treatise in  which  he complained about the fact that Hero's principle of minimum distance does not apply to explain refraction of light. Pierre de Fermat gave the solution to this question four years later in a letter to de la Chambre in which he explained that denser mediums create a resistance to light propagation which makes the speed of light slower and that light takes the path of least time. Remarkably, this was some years before  the development of calculus by Newton and Leibniz, but Fermat had already developed a method to calculate minima and maxima,  the method of adequality. 
Initially,  there was reluctance to accept the Fermat principle because it was thought that the principle attributed rationality to nature, in the sense that light had to choose the path of least time among all possible paths from one point to another. Fermat died the same year in  which  Robert Hooke first proposed the wave theory of light in 1665 and he was not alive to witness the new premise of Huygens' principle. It was in 1678 when Christiaan Huygens proposed that every point in the spherical wave front is the source of a new spherical wave front. Huygens was able to prove that his principle could explain the law of refraction, but it seems that he did not realize that it implies also the principle of least time. 
For a detailed historical account of these developments, see \cite{Dar12}.

\subsubsection{Early GR history}

The original classical Fermat principle was formulated in a Euclidean space with time as an independent non-relativistic quantity. But it is not so obvious how to generalize the principle to relativistic spacetimes, where time flow depends on the observer. Soon after Einstein formulated the theory of general relativity, H. Weyl established a version of the Fermat principle for static spacetimes \cite{weyl17}. The advantage of considering static spacetimes is that there is a global notion of time, let us say, a universal time, and also a distinguished notion of rest space. 
We can directly translate the classical Fermat principle to this background, though the destination shifts from a single point to a static worldline. The primary difference is that the coordinate speed varies from point to point, as the metric is not necessarily flat. Consequently, these trajectories are best described as lightlike curves.

The next class of spacetimes in which the Fermat principle was considered (see \cite{LeCi27, syn25}  and also \cite{Pham57,LauLif62}) is that of stationary spacetimes.  The main difference with the static case is that there is no global notion of rest space. But there is still a clear notion of universal time and there are also stationary orbits. This means that one can consider a sort of quotient rest space as the space of stationary wordlines. Then one can generalize the principle by considering lightlike curves from one event to one stationary worldline; light rays follow the trajectories that arrive first at the worldline, which turn out to be lightlike geodesics. Later, in the 1970s, K.~Uhlenbeck further generalized Fermat's principle to globally hyperbolic spacetimes in her study of Morse theory for geodesics on Lorentzian manifolds \cite{Uhlenb75}, highlighting how Fermat's principle is an excellent tool for the study of lightlike geodesics.

Observe that  in Classical Mechanics it was the Fermat principle which determined the trajectories of light rays without being subject to other physical laws. At most, it could be deduced from Huygens' principle, but apart from these two principles and observation, there wasn't a specific theory to describe light rays. Nonetheless, in GR there was  a precise  description of the trajectories of light rays at least in smooth backgrounds. These trajectories are lightlike geodesics, which can be computed solving an ODE.  This means that if one wants to generalize the Fermat principle to arbitrary spacetimes, then one has to prove that the light rays that satisfy the principle are lightlike geodesics. This was a difficult enterprise for many years, first because there isn't a global time in arbitrary spacetimes, and then because even when I. Kovner  \cite{Kov90} observed that one could consider the proper time of the observer receiving the light ray, it wasn't so evident how to prove that lightlike geodesics minimize the proper time of the observer. The suitable variational setup was finally found by V. Perlick \cite{Per90}. The final statement says that light rays from one event (a point from the spacetime) to an observer (a timelike curve) are critical points of the arrival time in the space of lightlike curves of a suitable regularity. Here the arrival time is the parameter of the observer in the point where the lightlike curve touches the observer, which in particular can be chosen as the proper time of the observer. A comprehensive  treatment of  Fermat-type principles in relativity is provided in \cite{Per00}, where one finds the reduction formalism from which the later Finsler description of conformally stationary spacetimes naturally emerges \cite{CJM11}.

\section{The variational formulation}\label{Fermat}

Although the Fermat principle was originally formulated without a formal proof, it accurately describes the trajectories of light rays. It acts as a bridge in geometrical optics, as it can be derived from Huygens' principle and used to deduce Snell's law. In GR, as we already have a description of light rays as lightlike geodesics, at least in smooth settings,  the Fermat principle requires a proof. 

 In this section, we  revisit the proof given by V. Perlick in \cite{Per90}, postponing a functional-analytic proof to Subsection~\ref{Fermatsmooth}.

The setup is the following. Let $(M,g)$ be a Lorentzian manifold, $p\in M$ an arbitrary event and $\alpha:I\subset \R\rightarrow M$ an (injective) timelike curve. Then we consider all the $C^1$-lightlike curves $\gamma:[a,b]\rightarrow M$ such that $\gamma(a)=p$ and $\gamma(b)\in {\rm Im}(\alpha)$ and the arrival time is given by 
\begin{equation}\label{T}
	T(\gamma)=s, \quad \text{ where $s=\alpha^{-1}(\gamma(b))$.} \end{equation}

\subsection{The variational vector fields of variations by lightlike curves} Just considering a variation of a curve $\gamma$ by lightlike curves $\gamma_w$,  with $w\in (-\varepsilon,\varepsilon)\subset \R$,  and deriving with respect to the variational variable $w$, one easily deduces that the variational vector field $Z$ along $\gamma$ satisfies
\begin{equation}\label{vareq}g(\dot\gamma, \frac{D}{dt} Z)=0,
\end{equation}
where $\frac{D}{dt}$ is the covariant derivative along $\gamma$ determined by the Levi-Civita connection of $g$. 
This is because 
\[\frac{\partial}{\partial w}g(\dot\gamma_w,\dot\gamma_w)|_{w=0}=2g(\frac{D}{dw}\dot\gamma_w,\dot\gamma_w)|_{w=0}
=2g(\frac{D}{dt}Z,\dot\gamma).\]
As $g(\dot\gamma_w,\dot\gamma_w)=0$ for all $w\in(-\varepsilon,\varepsilon)$, 
it follows that \eqref{vareq} holds. The next step is to check that \eqref{vareq} is enough to guarantee that $Z$ is the variational vector field of a variation  by  lightlike curves. There are several possibilities to consider the space of lightlike curves in the variation:
\begin{enumerate}
	\item One can consider variations through (piecewise) smooth curves. In this setting, the arrival time is well-defined for each variation, and one can call a curve a critical point if the first variation vanishes for all (piecewise) smooth variations. However, because the breakpoints may occur at arbitrary parameter values, the space of piecewise smooth lightlike curves does not carry a natural manifold structure (just as the space of all piecewise smooth curves does not).

	\item Another natural possibility is to consider a space of lightlike curves with a certain regularity in such a way that this space admits a manifold structure. This is the case of $C^1$ future-pointing lightlike curves (see Subsection~\ref{C1}).  On the contrary, the space of  $H^1$-lightlike curves  does not admit a $C^1$-manifold structure (see Subsection~\ref{W1p}). We will see in Subsection~\ref{bypassing} as  the lack of regularity of the space of  $H^1$-lightlike curves  can be circumvented in some cases.
\end{enumerate}
\subsection{The setting of  smooth curves} V. Perlick worked in \cite{Per90} with smooth lightlike curves and proved  that any smooth vector field $Z$ along  $\gamma$  satisfying
\begin{equation}\label{variationalvf}
	g\!\left(\dot\gamma,\frac{D}{dt}Z\right)=0 \quad \text{on } [a,b],
	\qquad
	Z(a)=0,
	\qquad
	Z(b)\in\mathrm{span}\{\dot\alpha(T(\gamma))\}
\end{equation}
is the variational vector field of a variation by smooth lightlike curves with initial point fixed at $p$ and final point constrained to lie on $\alpha$.   The proof  was accomplished in several steps:
\begin{enumerate}
	\item[\em Step 1] For every $s\in [a,b]$ choose a chart $\psi_s$  defined in an open subset  $\Omega_s=A_s\times [a_s,b_s]\subset \R^n$ such that $\partial_n$ is timelike, $g(\dot\gamma,\partial_n)<0$, and there is an interval $(t^s_0,t^s_1)$ such that $\gamma|_{(t^s_0,t^s_1)}\subset \Omega_s$ and the projection of $\gamma|_{[t^s_0,t^s_1]}$ on $A_s$ is an embedded  curve. 
	We additionaly will require that $\gamma$ is a straight line parametrized with one of the coordinates, contained in $x_n=0$ and that  $\partial_n|_{\gamma(t)}$  does not belong to ${\rm span}\{Z(t),\dot\gamma(t)\}$  (these conditions will be used in {\em step 4*}).
	Moreover,  in this chart the metric $g$ can be expressed as
	\[g=e^{2f}\left( \sum_{i,j=1}^{n-1}\hat g_{ij} dx^i\otimes dx^j- \big(dx^n+\sum_{i=1}^{n-1}\hat\phi_idx^i\big)^2\right) \]
	for certain functions $f, \hat\phi_i$ satisfying that $e^{2f}=-g_{nn}$ and $\hat\phi_i=\frac{-g_{ni}}{g_{nn}}$ and $\hat g_{ij}:=-\frac{g_{ij}}{g_{nn}}+\frac{g_{ni}g_{nj}}{g_{nn}^2}$, where $g_{ij}$ are the coefficients of $g$ in the chart $(\Omega_s,\psi_s)$. Taking into account that $g(\dot\gamma,\dot\gamma)=0$, one can obtain the coordinate $\dot\gamma^n$ in terms of $\gamma$ and the other derivatives $\dot\gamma^i$, $i=1,\ldots,n-1$: 
	\begin{equation}\label{gamman}
		\dot\gamma^n=\sqrt{\sum_{i,j=1}^{n-1}\hat g_{ij} \dot\gamma^i \dot\gamma^j}-\sum_{i=1}^{n-1}\hat \phi_i\dot\gamma^i.
	\end{equation}
	\item[\em Step 2]  Using the Lebesgue's number of the covering of $[a,b]$ obtained in {\em step 1}, we can choose  a partition of $[a,b]$,  $t_0=a<t_1<\ldots<t_{r-1}<t_r=b$ 
	with the property that $[t_i,t_{i+1}]$ is contained in one of the intervals of {\em step 1}.
	\item[\em Step 3] Given the first interval $[a,t_1]$ and the associated chart of {\em step 1}, which we denote now as $(A_1\times [a_1,b_1])$, 
	we can consider the projection $\hat\gamma_1$ of $\gamma|_{[a,t_1]}$ on $A_1$ and also the projection of the coordinates of the variational vector field $Z$, which we will denote by $\hat Z_1$. Then
	\[\hat\Lambda(t,w)=\hat\gamma_1(t)+w \hat Z_1(t)\]
	is a variation of $ \hat\gamma_1$ with variational vector field $\hat Z_1$ for $w\in(-\varepsilon,\varepsilon)$ with $\varepsilon>0$ small enough. Each curve $\hat\gamma_w(t)=\hat\Lambda(t,w)$ can be lifted to a lightlike curve by computing the $n$-coordinate  as the solution of   \eqref{gamman} with initial point $ \gamma^n(a)$. The lifted variation of lightlike curves has $Z$ as a variational vector field as both satisfy the differential equation \eqref{vareq} with initial condition $Z(a)=0$.
	\item[\em Step 4] 
	Proceed now with the second interval $[t_1,t_2]$ as in {\em step 3} in such a way that $\gamma|_{[t_1,t_2]}$ is contained in a chart $(\Omega_2,\psi_2)$ as in {\em step 1}  with $\Omega_2=A_2\times [a_2,b_2]$.  In this case, we can take as initial values to solve \eqref{gamman} the points $\gamma_w(t_1)$ obtained in {\em step 3}. We have a variation $\hat\Lambda:[a,t_2]\times(-\varepsilon,\varepsilon)\rightarrow M$ which is not smooth at $t_1$. Let us see how to smoothen it. The idea is to project the variation in a neighborhood of $t_1$  with the second chart.  Indeed, we can extend the variation of {\em step 3} to some $[a,t_1+\delta]$ and the one in the interval $[t_1,t_2]$ to $[t_1-\delta,t_2]$. Now we consider the projections of the variations on $A_2$  using $\pi_2:A_2\times[a_2,b_2]\rightarrow A_2$.  Consider two neighborhoods $U_1$ and $U_2$ of $ \hat\gamma_2(t_1)$ on $A_2$,  where $\hat\gamma_2$ is the projection of $\gamma|_{[t_1-\delta,t_2]}$ to $A_2$,  and consider a smaller $\delta$ if necessary in such a way that $\hat\gamma_2(t_1-\delta)\in U_1\setminus U_2$ and $\hat\gamma_2(t_1+\delta)\in  U_2\setminus U_1$.  We will assume that $Z$ is not tangent to $\gamma$ at $t_1$  (the tangent case will be considered in {\em step 4*}). Then in particular  $Z(t_1)$ is not zero and it does not belong to ${\rm span}\{\partial_n|_{\gamma(t_1)},\dot\gamma(t_1)\}$, which implies that the projection of $Z$ with $\pi_2$ is non-tangent to $\hat\gamma_2$ at $t_1$.  As a consequence, we have two lightlike variations of $\gamma|_{[t_1-\delta,t_1+\delta]}$ (obtained respectively with the first and the second chart), which can be projected with $\pi_2$ to variations in $A_2$. The fact that the projection of the variational vector field is not tangent to $\hat\gamma_2$ implies that  the  projected  variations generate a surface around $ \hat\gamma_2(t_1)$ and we can extend  its variational vector fields $\hat W_1$ and $\hat W_2$  to vector fields
	 $W_1$ and $ W_2$  in a neighborhood of $ \hat\gamma_2(t_1)$ (to obtain this we may need to choose smaller $\delta$ and neighborhoods $U_1$ and $U_2$). Using a partition of unity associated with $U_1$ and $U_2$ we can construct a vector field $ W$ that coincides with $ W_1$ in $U_1\setminus U_2$ and with $ W_2$ in $U_2\setminus U_1$.  As around the instant $t_1-\delta$ the restriction of $ W$ coincides with $ W_1$,  we can lift the variation  obtained by applying the flow of $W$ to $\hat\gamma_2|_{[t_1-\delta,t_1+\delta]}$  to lightlike curves using the initial conditions at $t_1-\delta$, which will make the lifted variation smooth with respect to that on $[a,t_1-\delta]$. Then we can lift again  the second projected variation on  the interval $[t_1+\delta,t_2]$ using the initial values at $t_1+\delta$ and will obtain the smooth variation by lightlike curves on $[a,t_2]$.
	\item[\em Step 4*]  Let $I$ be the subinterval of $[a,b]$ such that $\gamma|_I\subset \Omega_1\cap \Omega_2$.  If there is an instant $\bar t\in I$ where $Z$ is not tangent to $\gamma$, then, after redefining the partition, one may take $\bar t$ as the gluing time and proceed as in {\em step 4}.  When the vector field $Z$ is tangent to $\gamma$ along the whole interval $I$, it is not clear how to obtain the extended vector fields of {\em step 4}. In order to overcome this difficulty, 
 choose two neighborhoods $V_1$ and $V_2$ of $\gamma(t_1)$, both  contained in $\Omega_1\cap \Omega_2$, and $\delta>0$ small enough such that $\gamma(t_1-\delta)\in V_1\setminus V_2$ and $\gamma(t_1+\delta)\in V_2\setminus V_1$. Now let $X_1$ and $X_2$ be the $\partial_n$ in the first and the second chart, respectively. We use  a partition of unity associated with $V_1$ and $V_2$ to generate a vector field $X$ on $V_1\cap V_2$ from $X_1$ and $X_2$. Using $X$ and the hypersurface $x_n=0$ of the first chart we can generate a new chart which coincides with the first chart in a neighborhood of $\gamma(t_1-\delta)$ and it has the same $\partial_n$-integral curves as the second chart around $\gamma(t_1+\delta)$. Using this intermediate chart, we can obtain a variation as in {\em step 3} which matches smoothly with the variation in $[a,t_1-\delta]$ and can be continued in a smooth way after $t_1+\delta$ with the variation of {\em step 4}.  
	
\item[\em Step 5] We can proceed with the next intervals as in {\em step 4} until the last one. As we want to obtain lightlike curves that arrive to $\alpha$, we will include another interval $[t_r,b]$ as in {\em step 1} in such a way that in the chart $(\Omega_r,\psi_r)$ the curve $\alpha$ is obtained as an integral curve of $\partial_n$. This concludes the construction of the lightlike variation.
\end{enumerate}
\subsection{Reducing the problem to the classical calculus of variations}\label{theend}
One of the main problems to apply calculus of variations to the relativistic Fermat principle is that the arrival time is not computed in general as an integral along the lightlike curves. In order to overcome this problem, the proposal by V. Perlick  in \cite{Per90} was to express the variational vector fields $Z$ of lightlike curves as  $Z=W+f U$, where $W$ is a vector field along $\gamma$ such that $W(a)=W(b)=0$,  $f:[a,b]\rightarrow \R$ is a smooth function such that $f(a)=0$, and $U$ is the  parallel vector field along $\gamma$ such that $U(b)=\dot\alpha(s)$, being $s$ such that $\gamma(b)=\alpha(s)$. It turns out that $Z$ satisfies \eqref{vareq} if and only if
\begin{equation}\label{eqf}
	f(t)=-\int_a^t \frac{g(\frac{DW}{ds},\dot\gamma)}{g(U,\dot\gamma)} ds.
\end{equation}
Thus, let $Z$ satisfy \eqref{vareq} and let $\gamma_w$ be the variation defined by $Z$. By definition of $T$ we have
\[
\gamma_w(b)=\alpha(T(\gamma_w)).
\]
Differentiating at $w=0$ yields
\begin{equation}\label{Zb}
Z(b)=\frac{d}{dw}\Big|_{w=0} \gamma_w(b)
=\frac{d}{dw}\Big|_{w=0}\alpha(T(\gamma_w))
=\dot\alpha(T(\gamma))\,\frac{d}{dw}\Big|_{w=0}T(\gamma_w).
\end{equation}
Thus $\frac{d}{dw}|_{w=0}T(\gamma_w)$ is exactly the scalar coefficient $f(b)$,  in the decomposition
 $ Z(t)=W(t)+f(t)U(t)$, at $t=b$  (recall that $W(b)=0$ and $U(b)=\dot\alpha(T(\gamma))$). Hence,  $\gamma$ is critical if and only if  $dT(\gamma)[Z]=0$ for all variational vector field $Z$, i.e. if and only if $f(b)=0$ for all such $Z$.  From \eqref{eqf}, being $W$ any vector field along $\gamma$ with vanishing endpoints, 
 this turns out to be equivalent to
\[\frac{D}{dt}\left(\frac{1}{g(U,\dot\gamma)}\dot\gamma \right)=0,\]
which means that $\gamma$ is a pregeodesic.

\section{The regularity problem}
\subsection{The lack of regularity of the path space of time-oriented lightlike  curves in the $W^{1,p}$-topology}\label{W1p}

In this subsection we show that the path space of time oriented (future-pointing or past-pointing) lightlike curves is not a 
$C^1$ embedded submanifold of the full $W^{1,p}$-manifold of the  curves in the ambient space. We prove this in the simple case  when the ambient space is the Minkowski spacetime, so one in general  cannot expect  the  path space of time-oriented lightlike curves   to be a $C^1$-manifold in the $W^{1,p}$-topology. For the $H^1$-topology this was already observed with an example in \cite{GiaMa98}.

Let us denote by $I$ the interval $[0,1]$. Let $d\geq 2$, $p\in [1,\infty]$ and let $X:=W^{1,p}_0(I,\R^{1+d})=\{\gamma\in W^{1,p}(I,\R^{1+d}) : \gamma(0)=0\}$.
This is a closed Banach subspace of $W^{1,p}(I,\R^{1+d})$ where the only constant  curve is the one of constant value $0$.  Let  $\eta$ be the Minkowski metric  with signature $(-,+,\ldots,+)$  on $\R^{1+d}$ and let us define
\[
\mathcal{L}:=\{\gamma\in X:\ \eta(\dot\gamma,\dot\gamma)=0 \ \text{a.e. on }[0,1]\}.
\]
We use below the Bouligand tangent cone. For a set $S\subset X$ in a Banach space $X$ and $x_0\in \bar S$,
\[
C_{x_0}S:=\left\{v\in X:\ \exists\, x_n\in S,\ t_n\downarrow 0 \text{ with } x_n\to x_0 \text{ and } \frac{x_n-x_0}{t_n}\to v\right\}.
\]
If $S$ is a $C^{1}$ embedded submanifold of $X$, then $C_{x_0}S$ coincides with the usual tangent space $T_{x_0}S$,
hence is a \emph{linear} subspace of $X$.
\begin{prop}\label{singular}
Let  $\gamma_0\in\mathcal L$ and assume  that  there exists a measurable set $E\subset I$ with $|E|>0$ such that
$\dot\gamma_0=0$ a.e. on $E$.
Then $\mathcal L$ is not a $C^1$ embedded submanifold of $X$ in any neighborhood of $\gamma_0$. 
\end{prop}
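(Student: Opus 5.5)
Since a $C^1$ embedded submanifold has linear Bouligand cones, the plan is to show that $C_{\gamma_0}\mathcal L$ is not a linear subspace of $X$. Local embeddedness near $\gamma_0$ would make $C_{\gamma_0}\mathcal L$ equal to $T_{\gamma_0}\mathcal L$, so it suffices to find $v\in C_{\gamma_0}\mathcal L$ with $-v\notin C_{\gamma_0}\mathcal L$ (or two cone vectors whose sum is not in the cone). The degenerate set $E$, where $\dot\gamma_0=0$, is where the cone structure of the null condition shows up.

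\textbf{Step 1: a cone vector.} Fix a future-pointing null vector $\ell\in\R^{1+d}$, say $\ell=(1,e_1)$. Define $v(t)=\int_0^t \chi_E(s)\,\ell\,ds$ and $\gamma_n=\gamma_0+t_n v$. On $E$ we get $\dot\gamma_n=t_n\ell$, which is null. Off $E$ we get $\dot\gamma_n=\dot\gamma_0$, also null. Hence $\gamma_n\in\mathcal L$, and $(\gamma_n-\gamma_0)/t_n=v$, so $v\in C_{\gamma_0}\mathcal L$. The same construction with $\ell'=(1,-e_1)$ (using $d\ge 1$) gives $v'$.

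\textbf{Step 2: linearity fails.} I would test the sum $w=v+v'$, with $\dot w=\chi_E\,(2,0,\dots,0)$, which is timelike on $E$. Suppose $w\in C_{\gamma_0}\mathcal L$, so there are $x_n\in\mathcal L$ with $u_n:=(x_n-\gamma_0)/t_n\to w$ in $W^{1,p}$. On $E$ we have $\dot x_n=t_n\dot u_n$, so $\eta(\dot u_n,\dot u_n)=0$ a.e. on $E$. Passing to a subsequence, $\dot u_n\to\dot w$ in $L^p$ and hence a.e. on $E$. By continuity of $\eta$, $\eta(\dot w,\dot w)=0$ a.e. on $E$, which contradicts $\eta(\dot w,\dot w)=-4$ on $E$ with $|E|>0$. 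So the cone is not closed under addition. (Using $-v$ instead would not work directly, since $-\ell$ is also null; the obstruction comes from sums, i.e.\ from convexity.)

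\textbf{Main obstacle.} The point needing care is $p=\infty$, where $L^\infty$ convergence still yields a.e. convergence, so the argument goes through. I also need to justify that a $C^1$ embedded submanifold has $C_{x_0}S=T_{x_0}S$ in a Banach space. In a submanifold chart, $S$ is locally the graph of a $C^1$ map over a closed subspace, and the Bouligand cone equals that subspace, so this is the standard fact the paper already quotes. Finally, "not a submanifold in any neighborhood of $\gamma_0$" follows because the cone at $\gamma_0$ is determined by the germ of $\mathcal L$ at $\gamma_0$.
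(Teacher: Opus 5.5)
Your proposal is correct and is essentially the paper's proof. The paper also takes two non-collinear null vectors $k_1,k_2$ supported on $E$, of which your $(1,\pm e_1)$ is a concrete instance, gets two Bouligand cone vectors, and shows their sum (timelike on $E$) is not in the cone because the null condition passes to the a.e. limit on $E$, including the case $p=\infty$.
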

\begin{proof}
If $\mathcal L$ were a $C^1$ embedded submanifold near $\gamma_0$, then $C_{\gamma_0}\mathcal L=T_{\gamma_0}\mathcal L$
and in particular $C_{\gamma_0}\mathcal L$ would be a \emph{linear} subspace of $X$. Let us then show that $C_{\gamma_0}\mathcal L$ is not linear.
	
Choose two null vectors $k_1,k_2\in\R^{1+d}$ which are not collinear
so that $\eta(k_i,k_i)=0$ but $\eta(k_1+k_2,k_1+k_2)<0$.
Define $v_i\in X$ by prescribing its derivative
\[
\dot v_i:=\chi_E\,k_i,\qquad v_i(0)=0.
\]
Let $t_n\downarrow0$ and set $\gamma_n^{(i)}:=\gamma_0+t_n v_i$.
Then
\[
\dot\gamma_n^{(i)}=\dot\gamma_0+t_n\chi_E k_i
=\begin{cases}
\dot\gamma_0&\text{on }I\setminus E,\\
t_n k_i&\text{on }E,
\end{cases}
\]
hence $\eta(\dot\gamma_n^{(i)},\dot\gamma_n^{(i)})=0$ a.e.  and, therefore, $\gamma_n^{(i)}\in\mathcal L$ for all $n$, and
\[
\frac{\gamma_n^{(i)}-\gamma_0}{t_n}=v_i\ \to\ v_i \quad\text{in }X.
\]
Thus $v_1,v_2\in C_{\gamma_0}\mathcal L$. Let us see now that $v_1+v_2\notin C_{\gamma_0}\mathcal L$.
Assume by contradiction that $v_1+v_2\in C_{\gamma_0}\mathcal L$. Then there exist $\gamma_n\in\mathcal L$ and
$t_n\downarrow0$ such that
\[
\frac{\gamma_n-\gamma_0}{t_n}\to v_1+v_2 \quad\text{in }X.
\]
Passing to derivatives  gives
\[
\frac{\dot\gamma_n-\dot\gamma_0}{t_n}\to \chi_E(k_1+k_2)\quad\text{in }L^p(I).
\]
Since $\dot\gamma_0=0$ a.e. on $E$, this implies
\[
\frac{\dot\gamma_n}{t_n}\to k_1+k_2\quad\text{in }L^p(E).
\]
If $p<\infty$,  $L^p(E)$ convergence yields (after passing to a subsequence) pointwise a.e. convergence on $E$.
If $p=\infty$, $L^\infty$ convergence also implies pointwise a.e. convergence.
Hence, 
\[
\frac{\dot\gamma_n}{t_n}(t)\to k_1+k_2 \quad\text{for a.e. }t\in E.
\]
But each $\dot\gamma_n$ is null a.e.  so for a.e. $t\in E$,
\[
\eta\!\left(\frac{\dot\gamma_n}{t_n},\frac{\dot\gamma_n}{t_n}\right)
=\frac{1}{t_n^2}\eta(\dot\gamma_n,\dot\gamma_n)=0.
\]
Passing to the a.e. limit yields $\eta(k_1+k_2,k_1+k_2)=0$, contradicting the choice that $k_1+k_2$ is timelike.
Therefore $v_1+v_2\notin C_{\gamma_0}\mathcal L$.
\end{proof}
Let us now  define for $p\in[1,\infty]$ the path space of future-pointing lightlike curves in $X$:
\[
\mathcal L^{1,p}_+:=\{\gamma\in X:\ \eta(\dot\gamma,\dot\gamma)=0\ \text{a.e. on }I,\ \dot t>0\ \text{a.e. on }I\}.
\]
(and the analogous space $\mathcal L^{1,p}_-$ of past-pointing lightlike curves).
Although every $\gamma\in \mathcal L^{1,p}_{\pm}$ satisfies $\dot\gamma\neq 0$ a.e., we can  prove that the sets
$\mathcal L^{1,p}_{\pm}$ accumulate (in the $W^{1,p}$-topology) on null curves in $\mathcal L$ whose velocity vanishes on a subset of \emph{positive} Lebesgue measure. Let us show this for the set $\mathcal L^{1,p}_{+}$, $ p\in [1,+\infty)$. 
\begin{lemma}\label{notlocclosed}
	 Let $\gamma_0\in \mathcal L^{1,p}_{+}$.  Then every neighborhood of $\gamma_0$ in $X$
contains a point $\bar\gamma\in X\setminus \mathcal L^{1,p}_+$
that is the limit of a sequence in $\mathcal L^{1,p}_+$ contained in the same neighborhood.
Equivalently, $\mathcal L^{1,p}_+\cap B_r(\gamma_0)$ is not closed in $B_r(\gamma_0)$ for any $r>0$.
Therefore $\mathcal L^{1,p}_+$ is not locally closed at $\gamma_0$.
\end{lemma}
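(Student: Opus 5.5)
Given $\gamma_0\in\mathcal L^{1,p}_+$ and $r>0$, the plan is to construct a curve $\bar\gamma\in\mathcal L\setminus\mathcal L^{1,p}_+$ with $\|\bar\gamma-\gamma_0\|_{W^{1,p}}<r/2$ whose velocity vanishes on a set of positive measure, and then approximate $\bar\gamma$ in $W^{1,p}$ by future-pointing lightlike curves. Since $p<\infty$, the map $t\mapsto\dot\gamma_0\chi_{[s,s+h]}$ has $L^p$-norm tending to $0$ as $h\to0$ by absolute continuity of the integral. This is the key point where $p<\infty$ is used, and it fails for $p=\infty$.

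\emph{Construction of $\bar\gamma$.} Fix a small interval $E=[s,s+h]\subset I$ and set
\[
\dot{\bar\gamma}:=\dot\gamma_0\,\chi_{I\setminus E}+\frac{1}{h}\Big(\int_E\dot\gamma_0\Big)\chi_{E'}\,,
\]
where $E'$ is another interval. This is not yet right, since the second term is an average of null future vectors and is in general timelike. A simpler choice is to take $\dot{\bar\gamma}:=\dot\gamma_0\chi_{I\setminus E}$ and $\bar\gamma(0)=0$. Then $\bar\gamma\in X$, $\eta(\dot{\bar\gamma},\dot{\bar\gamma})=0$ a.e., and $\dot{\bar\gamma}=0$ on $E$, so $\dot t=0$ on a set of positive measure and $\bar\gamma\notin\mathcal L^{1,p}_+$. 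Moreover $\|\dot{\bar\gamma}-\dot\gamma_0\|_{L^p}=\|\dot\gamma_0\|_{L^p(E)}$, and the sup-norm of $\bar\gamma-\gamma_0$ is bounded by $\|\dot\gamma_0\|_{L^1(E)}$. Both quantities are smaller than $r/4$ for $h$ small. The endpoint is not required to be preserved, since $X$ only fixes $\gamma(0)=0$, so no endpoint correction is needed.

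\emph{Approximating sequence.} Choose a fixed future null vector $k=(1,e)$ with $|e|=1$ and define $\gamma_n$ by $\gamma_n(0)=0$ and
\[
\dot\gamma_n:=\dot\gamma_0\chi_{I\setminus E}+\tfrac1n k\,\chi_E .
\]
Then $\dot\gamma_n$ is null a.e. On $I\setminus E$ we have $\dot t>0$ a.e. because $\gamma_0\in\mathcal L^{1,p}_+$, and on $E$ we have $\dot t=1/n>0$. Hence $\gamma_n\in\mathcal L^{1,p}_+$, and
\[
\|\dot\gamma_n-\dot{\bar\gamma}\|_{L^p}=\tfrac1n|k|h^{1/p}\to0 .
\]
So $\gamma_n\to\bar\gamma$ in $X$, and for $n$ large all $\gamma_n$ lie in $B_r(\gamma_0)$, using the triangle inequality with $\|\bar\gamma-\gamma_0\|<r/2$.

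\emph{Conclusion and main obstacle.} This shows that $\bar\gamma$ lies in the closure of $\mathcal L^{1,p}_+\cap B_r(\gamma_0)$ relative to $B_r(\gamma_0)$ but not in $\mathcal L^{1,p}_+$, so the intersection is not relatively closed for any $r$. This is exactly the failure of local closedness at $\gamma_0$. The only delicate step is making $\bar\gamma$ close to $\gamma_0$ in $W^{1,p}$, which needs $\|\dot\gamma_0\|_{L^p(E)}\to0$ as $|E|\to0$. This holds for $p<\infty$ and explains the restriction to $p\in[1,\infty)$. I would also remark that $\bar\gamma$ satisfies the hypothesis of Proposition~\ref{singular}, linking the two results.
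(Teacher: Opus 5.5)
Your proof is correct and follows essentially the paper's argument: you delete the velocity of $\gamma_0$ on a small set $E$ to get $\bar\gamma\in\mathcal L\setminus\mathcal L^{1,p}_+$, and then approximate $\bar\gamma$ by curves that are future-pointing null on $E$ with velocity of size $1/n$. The only difference is that you use the constant null vector $\tfrac1n k$ on $E$ where the paper uses $\tfrac1n\dot\gamma_0$, which makes no difference to the argument; you should also delete the abandoned first attempt at defining $\dot{\bar\gamma}$ from your write-up.
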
	
\begin{proof}
Let $r>0$.
Since $\dot\gamma_0\in L^p(I)$, by absolute continuity of the Lebesgue integral,  there exists a measurable
set $E\subset I$ with $|E|>0$ such that
\[
\|\dot\gamma_0\|_{L^p(E)}<\frac r2.
\]
Define a null curve $\bar\gamma=(\bar t,\bar x)\in X$ by prescribing its velocity:
\[
\dot{\bar\gamma}(s):=
\begin{cases}
	0 & s\in E,\\
	\dot\gamma_0(s)& s\in I\setminus E,
\end{cases}
\qquad \bar\gamma(0)=0.
\]
As  $\dot{\bar\gamma}=0$ a.e. on $E$, 
$\bar\gamma\notin\mathcal L^{1,p}_+$. Moreover (using on $X$ the equivalent $L^p$-norm for the derivative)
\[
\|\bar\gamma-\gamma_0\|_{W^{1,p}}
=\|\dot{\bar\gamma}-\dot\gamma_0\|_{L^p}
=\|\dot\gamma_0\|_{L^p(E)}<\frac r2,
\]
so $\bar\gamma\in B_r(\gamma_0)$.
Now define a sequence $\gamma_n=(t_n,x_n)\in X$ by
\[
\dot\gamma_n(s):=
\begin{cases}
	\frac1n\,\dot\gamma_0(s)& s\in E,\\
	\dot\gamma_0(s) & s\in I\setminus E,
\end{cases}
\qquad \gamma_n(0)=0.
\]
Then each $\gamma_n$ is null a.e.,  and satisfies
$\dot t_n>0$ a.e.
Hence $\gamma_n\in\mathcal L^{1,p}_+$ for all $n$.
Furthermore,
\[
\|\gamma_n-\bar\gamma\|_{W^{1,p}}
=\|\dot\gamma_n-\dot{\bar\gamma}\|_{L^p}
=\frac1n\,\|\dot\gamma_0\|_{L^p(E)}\to 0,
\]
so $\gamma_n\to\bar\gamma$ in $W^{1,p}$. In particular,
\[
\|\gamma_n-\gamma_0\|_{W^{1,p}}
\le \|\gamma_n-\bar\gamma\|_{W^{1,p}}+\|\bar\gamma-\gamma_0\|_{W^{1,p}}
<\frac r2+\frac r2=r,
\]
hence $\gamma_n\in B_r(\gamma_0)$ eventually.
\end{proof}
 In the case $p=\infty$, we have the following: 
\begin{lemma}\label{notlocclosedinfty}
	 Let $\gamma_0=(t_0,x_0)\in \mathcal L^{1,\infty}_{+}$ with   
\[
\essinf_{I}\dot t_0=0.
\]
Then every neighborhood of $\gamma_0$ in $X=W^{1,\infty}_0(I,\R^{1+d})$
contains a point $\bar\gamma\in X\setminus \mathcal L^{1,\infty}_+$
that is the limit of a sequence in $\mathcal L^{1,\infty}_+$ contained in the same neighborhood.
\end{lemma}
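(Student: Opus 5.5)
The plan is to repeat the construction in the proof of Lemma~\ref{notlocclosed}, changing only how the set $E$ is chosen. For $p<\infty$ we used absolute continuity of the integral to find $E$ with $\|\dot\gamma_0\|_{L^p(E)}$ small. For $p=\infty$ small measure does not make the $L^\infty$-norm small. Instead, $E$ must be a set on which $\dot\gamma_0$ itself is uniformly small, and this is exactly what the hypothesis $\essinf_I \dot t_0=0$ provides. As before, on $X=W^{1,\infty}_0(I,\R^{1+d})$ we use the equivalent norm $\|\dot\gamma\|_{L^\infty}$, which is legitimate because $\gamma(0)=0$.

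\emph{Step 1: choice of $E$.} Fix $r>0$ and $\varepsilon>0$ to be chosen. Set
\[
E_\varepsilon:=\{s\in I:\ \dot t_0(s)<\varepsilon\}.
\]
By the definition of essential infimum, $\essinf_I\dot t_0=0$ gives $|E_\varepsilon|>0$ for every $\varepsilon>0$. Since $\gamma_0$ is null, we have $|\dot x_0|=\dot t_0$ a.e., where $|\cdot|$ is the Euclidean norm on $\R^d$. Hence on $E_\varepsilon$ we get $|\dot\gamma_0|\le \sqrt2\,\varepsilon$ a.e., up to the constant fixing the chosen norm on $\R^{1+d}$. Choose $\varepsilon$ so that $\|\dot\gamma_0\|_{L^\infty(E_\varepsilon)}<r/2$, and write $E:=E_\varepsilon$.

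\emph{Step 2: the limit curve and the approximating sequence.} Define $\bar\gamma$ exactly as before: $\dot{\bar\gamma}=0$ on $E$, $\dot{\bar\gamma}=\dot\gamma_0$ on $I\setminus E$, and $\bar\gamma(0)=0$. Its velocity vanishes on a set of positive measure, so $\bar\gamma\notin\mathcal L^{1,\infty}_+$. Moreover
\[
\|\bar\gamma-\gamma_0\|=\|\dot\gamma_0\|_{L^\infty(E)}<r/2 .
\]
Define $\gamma_n$ by $\dot\gamma_n=\frac1n\dot\gamma_0$ on $E$, $\dot\gamma_n=\dot\gamma_0$ on $I\setminus E$, and $\gamma_n(0)=0$. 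Then $\dot\gamma_n\in L^\infty$. Each $\gamma_n$ is null a.e., since scaling a null vector keeps it null, and $\dot t_n=\dot t_0/n>0$ a.e. on $E$. Hence $\gamma_n\in\mathcal L^{1,\infty}_+$. Finally
\[
\|\gamma_n-\bar\gamma\|=\tfrac1n\|\dot\gamma_0\|_{L^\infty(E)}\to0,
\]
and the triangle inequality puts $\gamma_n$ in $B_r(\gamma_0)$ for all $n$.

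The only genuinely new point, and the one I expect to need care, is Step~1. We have to ensure both $|E|>0$ and uniform smallness of the full velocity $\dot\gamma_0$ on $E$, not just of $\dot t_0$. The first comes from the definition of essential infimum. The second comes from the null condition, which controls the spatial part $\dot x_0$ by $\dot t_0$.

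I would also remark that the hypothesis cannot simply be dropped. If $\dot t_0\ge c>0$ a.e., then any $\gamma\in\mathcal L$ sufficiently $W^{1,\infty}$-close to $\gamma_0$ has $\dot t\ge c/2$ a.e. So $\mathcal L^{1,\infty}_+$ coincides locally with $\mathcal L$, which is closed, and the conclusion of the lemma fails for such $\gamma_0$.
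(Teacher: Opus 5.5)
Your proof is correct and follows the paper's argument essentially step by step. You use the same set $E_\varepsilon$ (the paper also imposes $0<\dot t_0$, which holds a.e.\ anyway), the same bound $|\dot\gamma_0|\le\sqrt2\,\dot t_0$ from the null condition, and the same curves $\bar\gamma$ and $\gamma_n$. The only difference is that you work with the equivalent norm $\|\dot\gamma\|_{L^\infty}$ on $X$ (justified by $\gamma(0)=0$), while the paper estimates the full $W^{1,\infty}$ norm and gets the bound $2\sqrt2\,\varepsilon$; your closing remark that the hypothesis $\essinf_I\dot t_0=0$ cannot be dropped is also correct.
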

\begin{proof}
 Since $\essinf_{I}\dot t_0=0$,
for every $\varepsilon>0$ the measurable set
\[E_\varepsilon:=\{s\in I:\ 0<\dot t_0(s)<\varepsilon\}\] 
has  positive measure.
Since $\eta(\dot\gamma_0,\dot\gamma_0)=0$ and $\dot t_0>0$ a.e., we have
$| \dot x_0|=\dot t_0$ a.e., hence $|\dot\gamma_0|\le \sqrt2\,\dot t_0$ and
\[
\|\dot\gamma_0\|_{L^\infty(E_\varepsilon)}\le \sqrt2\,\varepsilon.
\]
Define $\bar\gamma\in X$ by prescribing its velocity:
\[
\dot{\bar\gamma}(s):=
\begin{cases}
	0& s\in E_\varepsilon,\\
	\dot\gamma_0(s)& s\in I\setminus E_\varepsilon,
\end{cases}
\qquad \bar\gamma(0)=0.
\]
Then $\eta(\dot{\bar\gamma},\dot{\bar\gamma})=0$ a.e., but $\dot{\bar t}=0$ on
$E_\varepsilon$, so $\bar\gamma\notin\mathcal L^{1,\infty}_+$.
Moreover,
$\|\dot{\bar\gamma}-\dot\gamma_0\|_{L^\infty}\le \sqrt2\,\varepsilon$ and 
\[
|\bar\gamma(s)-\gamma_0(s)|\le \int_I | \dot{\bar\gamma}-\dot \gamma_0|  =\int_{E_\varepsilon}|\dot\gamma_0|
\le |E_\varepsilon|\|\dot\gamma_0\|_{L^\infty(E_\varepsilon)}\le \sqrt2\,\varepsilon,
\]
hence $\|\bar\gamma-\gamma_0\|_{W^{1,\infty}}\le 2\sqrt2\,\varepsilon$.
Now define $\gamma_n\in X$ by
\[
\dot\gamma_n(s):=
\begin{cases}
	\frac1n\,\dot\gamma_0(s) & s\in E_\varepsilon,\\
	\dot\gamma_0(s)& s\in I\setminus E_\varepsilon,
\end{cases}
\qquad \gamma_n(0)=0.
\]
Then each $\gamma_n$ is null a.e.\ and satisfies $\dot t_n>0$ a.e., so
$\gamma_n\in\mathcal L^{1,\infty}_+$ for all $n$, and
\[
\|\gamma_n-\bar\gamma\|_{W^{1,\infty}}
= \|\dot\gamma_n-\dot{\bar\gamma}\|_{L^\infty}
+\|\gamma_n-\bar\gamma\|_{L^\infty}
\le \frac1n\|\dot\gamma_0\|_{L^\infty(E_\varepsilon)}+\frac1n\int_{E_\varepsilon}|\dot\gamma_0|
\longrightarrow 0,
\]
so $\gamma_n\to\bar\gamma$ in $W^{1,\infty}$. Choosing $\varepsilon>0$ so that
$2\sqrt2\,\varepsilon<r$, we get $\bar\gamma\in B_r(\gamma_0)$ and
$\gamma_n\in \mathcal L^{1,\infty}_+\cap B_r(\gamma_0)$ for $n$ large. 
\end{proof}
 As a consequence  of Lemmas~\ref{notlocclosed} and \ref{notlocclosedinfty} we have: 
\begin{prop}
The sets $\mathcal L^{1,p}_\pm$, $p\in[1,\infty]$, are  \emph{not} embedded $C^1$ Banach submanifolds of $X$.
\end{prop}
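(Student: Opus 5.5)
The plan is to argue by contradiction, using the fact that an embedded $C^1$ submanifold of a Banach space is locally closed. Suppose $\mathcal L^{1,p}_+$ were an embedded $C^1$ Banach submanifold of $X$. Then every $\gamma_0\in\mathcal L^{1,p}_+$ would have a neighborhood $U$ in $X$ with $\mathcal L^{1,p}_+\cap U$ closed in $U$. Indeed, a submanifold chart straightens $\mathcal L^{1,p}_+\cap U$ onto $\varphi(U)\cap(\text{closed subspace})$, which is relatively closed. By shrinking, we may take $U=B_r(\gamma_0)$, since closedness relative to $U$ passes to smaller open sets. So it suffices to exhibit a single $\gamma_0\in\mathcal L^{1,p}_+$ at which local closedness fails. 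The case $\mathcal L^{1,p}_-$ reduces to this one through the linear isometry $(t,x)\mapsto(-t,x)$ of $X$. This isometry preserves $\eta$ and exchanges $\mathcal L^{1,p}_+$ with $\mathcal L^{1,p}_-$, so it maps embedded submanifolds to embedded submanifolds.

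For $p\in[1,\infty)$ I would invoke Lemma~\ref{notlocclosed} directly. It applies at every $\gamma_0\in\mathcal L^{1,p}_+$, provided $\mathcal L^{1,p}_+$ is nonempty; for instance, $\gamma_0(s)=(s,s e_1)$ lies in it. The lemma gives, inside any ball $B_r(\gamma_0)$, a point $\bar\gamma\notin\mathcal L^{1,p}_+$ that is a limit of points of $\mathcal L^{1,p}_+\cap B_r(\gamma_0)$. This contradicts the local closedness derived above.

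For $p=\infty$, Lemma~\ref{notlocclosedinfty} needs a base point with $\essinf_I\dot t_0=0$. I would construct one explicitly, for example $\gamma_0(s)=(s^2/2,\ (s^2/2)e_1)$. Its velocity is $\dot\gamma_0=s(1,e_1)$, which is null, lies in $L^\infty$, and has $\dot t_0=s>0$ a.e. with essential infimum $0$. The lemma then gives the same contradiction at this $\gamma_0$.

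The only delicate point is justifying that embedded $C^1$ submanifolds are locally closed, including the passage from the chart neighborhood to a ball. This follows from the definition via a submanifold chart $\varphi:U\to V\subset Y\oplus Z$ with $\varphi(\mathcal L\cap U)=V\cap Y$: the set $V\cap Y$ is closed in $V$, and $\varphi$ is a homeomorphism. For $p=\infty$, one should also note that we only need failure at one point, which is why the restricted hypothesis of Lemma~\ref{notlocclosedinfty} suffices.
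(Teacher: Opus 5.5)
Your proof is correct and follows the paper's argument: an embedded $C^1$ submanifold is closed in a chart neighborhood, and Lemmas~\ref{notlocclosed} and~\ref{notlocclosedinfty} show that this local closedness fails at suitable points of $\mathcal L^{1,p}_+$. Your additions fill in details the paper leaves implicit: the time-reversal isometry $(t,x)\mapsto(-t,x)$ handles $\mathcal L^{1,p}_-$, and the explicit base point $\gamma_0(s)=(s^2/2,(s^2/2)e_1)$, which has $\essinf_I \dot t_0=0$, covers the case $p=\infty$.
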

Indeed let us recall that if  $S\subset X$ is an embedded $C^1$ Banach submanifold and $x\in S$, then in a chart around $x$ the set $S$
is mapped to a closed complemented linear subspace of the model Banach space; hence $S\cap U$ is closed in $U$
for some neighborhood $U$ of $x$. In particular, $S$ is locally closed at each of its points (see \cite[Chapter 2, \S 2]{lang}).

\subsection{The regularity of the path space of $C^k$ lightlike future-pointing curves}\label{C1}
We consider now  the Banach manifold of $C^k$ curves ($k\in\mathbb N$, $k\ge1$), or  the Fr\'echet manifold of $C^\infty$ curves, and focus on the  future-directed (or past-directed) lightlike curves (null curves with  fixed
time orientation, hence with nowhere vanishing velocity). The argument below is identical to the one in \cite{GGP04}, with only notational changes and explicit bookkeeping of the $C^k$ (resp.\ $C^\infty$) framework.
\begin{rem}\label{nonempty}
	 Throughout the $C^k$--manifold discussion we assume that $\alpha$ is a timelike curve which is an embedding, i.e. $\mathrm{Im}(\alpha)$ is a one-dimensional submanifold of $M$.  Moreover we assume 
	\[
	\mathcal L^{k,\pm}_{p,\alpha}\neq\emptyset,
	\]
	i.e.\ there exists at least one $C^k$  lightlike curve
	$\gamma:[a,b]\to M$ such that $\gamma(a)=p$ and $\gamma(b)\in\mathrm{Im}(\alpha)$
	(with the prescribed time-orientation).
\end{rem}
Let $(M,g)$ be a time-oriented Lorentzian manifold of dimension $n$, $[a,b]\subset\R$, and fix $k\in\mathbb N$, $k\ge1$ (or $k=\infty$).
Fix a background Riemannian metric $h$ on $M$ with exponential map $\exp^h$.
Given a base curve $z_0\in C^k([a,b],M)$, consider the pullback bundle $z_0^*TM\to [a,b]$ and let
$\Gamma^k(z_0^*TM)$ be the Banach space (the Frech\'et space if $k=\infty)$ of $C^k$ sections $\xi:[a,b]\to TM$ along $z_0$.
For $\xi$ small enough  (so that $(z_0(t),\xi(t))$ stays in the domain of injectivity of $\exp^h$), define
\[
\Phi_{z_0}:\ \mathcal U_{z_0}\subset \Gamma^k(z_0^*TM)\longrightarrow C^k([a,b],M),\qquad
\Phi_{z_0}(\xi)(t):=\exp^h_{z_0(t)}(\xi(t)).
\]
Then $\Phi_{z_0}$ is a homeomorphism onto an open neighborhood of $z_0$ in the set $C^k([a,b],M)$; the collection of such maps
$\{\Phi_{z_0}\}_{z_0}$ provides an atlas, called the \emph{exponential atlas}.
Its transition maps are induced pointwise by smooth maps on $TM$, hence are smooth between the Banach spaces
of $C^k$ sections. In particular, $C^k([a,b],M)$ is a smooth Banach (Frech\'et, if $k=\infty$) manifold modeled on $C^k([a,b],\R^n)$.

Let $\alpha:I_\alpha\to M$ be a $C^k$ timelike curve and fix $p\in M$.
The endpoint evaluation map
\[
\mathrm{ev}_{a,b}:C^k([a,b],M)\to M\times M,\qquad \mathrm{ev}_{a,b}(z)=(z(a),z(b)),
\]
is smooth, and since $\alpha(I_\alpha)\subset M$ is a $1$--dimensional embedded submanifold of $M$,
the subset
\begin{align*}
C^k_{p,\alpha}([a,b],M):&=\{z\in C^k([a,b],M): z(a)=p,\ z(b)\in \alpha(I_\alpha)\}\\
&=\mathrm{ev}_{a,b}^{-1}\bigl(\{p\}\times \alpha(I_\alpha)\bigr)
\end{align*}
is a  smooth Banach submanifold of $C^k([a,b],M)$ (and a smooth Fr\'echet submanifold if $k=\infty$).

Inside it, the open \emph{regular} subset is
\[
C^{k,*}_{p,\alpha}([a,b],M):=\{z\in C^k_{p,\alpha}([a,b],M): \dot z(t)\neq 0\ \ \forall t\in[a,b]\}.
\]
We will shortly denote $C^{k}_{p,\alpha}([a,b],M)$ and $C^{k,*}_{p,\alpha}([a,b],M)$ by $C^{k}_{p,\alpha}$ and $C^{k,*}_{p,\alpha}$, respectively.

The (full) null locus and its regular part are
\[
\mathcal N^k_{p,\alpha}:=\{z\in C^k_{p,\alpha}: g\big(\dot z(t),\dot z(t)\big)=0\ \ \forall t\},
\qquad
\mathcal N^{k,*}_{p,\alpha}:=\mathcal N^k_{p,\alpha}\cap C^{k,*}_{p,\alpha}.
\]
Fix a smooth future-pointing timelike vector field $V_0$ on $M$ and define the time-oriented components
\[
\mathcal L^{k,+}_{p,\alpha}:=\{z\in \mathcal N^{k,*}_{p,\alpha}: g(\dot z,V_0)<0\},
\qquad
\mathcal L^{k,-}_{p,\alpha}:=\{z\in \mathcal N^{k,*}_{p,\alpha}: g(\dot z,V_0)>0\}.
\]
Proceeding as in \cite[Prop. 2.1]{GGP04}, we will prove the following. 
\begin{prop}\label{submani}
	With the above notation and $k\ge1$, $\mathcal N^{k,*}_{p,\alpha}$ is an embedded \emph{smooth} Banach
	submanifold of the open set $C^{k,*}_{p,\alpha}$. Moreover, $\mathcal L^{k,+}_{p,\alpha}$ and
	$\mathcal L^{k,-}_{p,\alpha}$ are open (and closed) submanifolds of $\mathcal N^{k,*}_{p,\alpha}$ and
	\[
	\mathcal N^{k,*}_{p,\alpha}=\mathcal L^{k,+}_{p,\alpha}\cup\,\mathcal L^{k,-}_{p,\alpha}.
	\]
	For $k=\infty$ the same holds in the Fr\'echet manifold $C^{\infty,*}_{p,\alpha}$.
\end{prop}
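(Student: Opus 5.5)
We will realize $\mathcal N^{k,*}_{p,\alpha}$ as the zero set of a smooth submersion
\[
F: C^{k,*}_{p,\alpha}\to C^{k-1}([a,b],\R),\qquad F(z)(t):=g\big(\dot z(t),\dot z(t)\big),
\]
and then apply the regular value theorem for Banach manifolds (in the Fr\'echet case $k=\infty$ we argue via a local graph construction, because no general implicit function theorem is available there). Smoothness of $F$ is checked in the exponential charts $\Phi_{z_0}$. In such a chart, $\dot z$ is a smooth fibrewise expression in $(\xi,\dot\xi)$, and composing with the smooth tensor $g$ gives a map of Banach spaces $C^k\to C^{k-1}$ induced by a smooth map on jets. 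Such maps are smooth by the usual $\Omega$-lemma.

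The key step is surjectivity of $dF(z)$ on $T_zC^{k,*}_{p,\alpha}=\{Z\in\Gamma^k(z^*TM): Z(a)=0,\ Z(b)\in T\mathrm{Im}\,\alpha\}$, together with a closed complement of its kernel. One computes $dF(z)[Z]=2g(\dot z,\tfrac{D}{dt}Z)$. Given $\phi\in C^{k-1}([a,b],\R)$, we look for $Z=fU$, where $U$ is a $C^k$ field along $z$ with $g(U,\dot z)\neq0$ everywhere (for instance $U=V_0$ along $z$, since $\dot z$ is null and nonzero, hence causal). The equation becomes $f'g(U,\dot z)+fg(\tfrac{DU}{dt},\dot z)=\phi/2$, a linear first-order ODE with $C^{k-1}$ coefficients. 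It has a unique $C^k$ solution with $f(a)=0$.

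The endpoint condition is the main obstacle: we need $Z(b)\in\mathrm{span}\{\dot\alpha\}$. Following Section~\ref{theend}, we choose $U$ parallel with $U(b)=\dot\alpha(s)$, which is timelike, so $g(U,\dot z)\neq0$ along $z$ because $U$ stays timelike under parallel transport. Then $Z(b)=f(b)\dot\alpha(s)$ is admissible. Since $z$ is only $C^k$, the parallel field $U$ is $C^k$ and $\tfrac{DU}{dt}=0$, so $f$ is given by the explicit integral $f(t)=\tfrac12\int_a^t\phi/g(U,\dot z)$. The map $\phi\mapsto fU$ is a continuous linear right inverse of $dF(z)$. Hence $\ker dF(z)$ is complemented by the image of this right inverse, and $F$ is a submersion. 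This gives the Banach submanifold structure, with tangent space exactly the fields satisfying \eqref{variationalvf}.

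For $k=\infty$, the same right inverse is tame and depends smoothly on $z$. We then use it to build local graph charts directly: we solve $F(z_0+\text{correction})=0$ by an ODE-parametrized construction along the lines of Section~\ref{Fermat}, adjusting the time component in a chart transverse to $\partial_n$. Alternatively, we note that $\mathcal N^{\infty,*}$ is the intersection of the $C^k$ submanifolds, with charts compatible in $k$.

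Finally, consider the sign of $g(\dot z,V_0)$ on a connected interval. For a nonzero null vector this quantity never vanishes, so it has constant sign along $z$. Thus $\mathcal N^{k,*}_{p,\alpha}$ is the disjoint union of $\mathcal L^{k,\pm}_{p,\alpha}$. Each of these is open, because the sign condition is $C^1$-open by the uniform bound on $[a,b]$. Being complements of each other, they are also closed.
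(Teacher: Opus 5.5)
Your proposal is correct and follows essentially the paper's argument. You realize $\mathcal N^{k,*}_{p,\alpha}$ as the zero set of $z\mapsto g(\dot z,\dot z)$, build a continuous right inverse of its differential of the form $fU$ from a scalar first-order ODE, apply the implicit function theorem (with charts compatible across $k$ for $k=\infty$), and split by the constant sign of $g(\dot z,V_0)$. The one real difference is that you take $U$ parallel along $z$ with $U(b)=\dot\alpha(s)$, instead of the paper's normalized timelike field $Y$ with $g(\dot z_0,Y)\equiv 1$. This turns the ODE into an explicit quadrature, so the compatibility $R_{k+1}=R_k|_{C^k}$ used for $k=\infty$ is evident. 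It also keeps $fU$ of class $C^k$ even when the base curve is merely $C^k$, whereas the normalized $Y$ is in general only $C^{k-1}$.
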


\begin{proof}
	On $C^{k,*}_{p,\alpha}$ consider
	\[
	\Psi_k:C^{k,*}_{p,\alpha}\longrightarrow C^{k-1}([a,b]),\qquad
	\Psi_k(z)(t):=\tfrac12\,g_{z(t)}(\dot z(t),\dot z(t)).
	\]
	In exponential charts, $z\mapsto (z,\dot z)$ is smooth from the $C^k$-topology to the $C^k\times C^{k-1}$-topology and
	$(x,v)\mapsto \tfrac12 g_x(v,v)$ is smooth on $TM$, hence $\Psi_k$ is smooth and
	\[
	\mathcal N^{k,*}_{p,\alpha}=\Psi_k^{-1}(0).
	\]
	Fix $z_0\in \mathcal N^{k,*}_{p,\alpha}$.
	A tangent vector at $z_0$ to $C^{k,*}_{p,\alpha}$ is a $C^k$ vector field $Z$ along $z_0$ such that
	\[
	Z(a)=0,\qquad Z(b)\in T_{z_0(b)}\alpha.
	\]
	 It is easy to show that $\Psi_k$  is differentiable and 
	\[
	d\Psi_k(z_0)[Z](t)=g\!\left(\dot z_0(t),\frac{D}{dt}Z(t)\right)\in C^{k-1}([a,b]),
	\]
	where $D/dt$ is the Levi--Civita covariant derivative along $z_0$.
	Let us choose a smooth timelike vector field $\widetilde Y$ on a neighborhood of $z_0([a,b])$ such that
	\[
	\widetilde Y(z_0(b))\in T_{z_0(b)}\alpha.
	\]
	Since $\dot z_0(t)$ is null and nonzero, $g(\dot z_0(t),\widetilde Y(z_0(t)))\neq0$ for all $t$; normalize along $z_0$ by
	\[
	Y(t):=\frac{\widetilde Y(z_0(t))}{g(\dot z_0(t),\widetilde Y(z_0(t)))}\qquad\Rightarrow\qquad
	g(\dot z_0(t),Y(t))\equiv 1,
	\]
	and still $Y(b)\in T_{z_0(b)}\alpha$.
	
	Given $h\in C^{k-1}([a,b])$, we consider  the scalar linear ODE for $\mu$ with initial condition
	\begin{equation}\label{ode}
	\dot\mu(t)+\mu(t)\,g\!\left(\dot z_0(t),\frac{D}{dt}Y(t)\right)=h(t),\qquad \mu(a)=0.
	\end{equation}
	The coefficients are $C^{k-1}$, hence there exists a unique solution $\mu\in C^k([a,b])$ depending continuously on $h$.
	Set $ Y_\mu :=\mu\,Y$. Then $ Y_\mu(a)=0$ by $\mu(a)=0$ and $ Y_\mu(b)=\mu(b)\,Y(b)\in T_{z_0(b)}\alpha$, so  $Y_\mu\in T_{z_0}C^{k,*}_{p,\alpha}$,  and
	\begin{align*}
	d\Psi_k(z_0)[ Y_\mu]
	&=g\!\left(\dot z_0,\frac{D}{dt}(\mu Y)\right)
	=\dot\mu\,g(\dot z_0,Y)+\mu\,g\!\left(\dot z_0,\frac{D}{dt}Y\right)\\
	&=\dot\mu+\mu\,g\!\left(\dot z_0,\frac{D}{dt}Y\right)
	=h.
	\end{align*}
	Thus $d\Psi_k(z_0)$ is surjective and admits a continuous linear right inverse; in particular, its kernel is complemented.  Indeed, the linear subspace of vector fields $\mu Y$ with $\mu$ an arbitrary $C^k$ function along $z_0$ is a complement, as the image of an arbitrary vector field  $Z$  coincides with the image of one of these vector fields (it is enough to set $h:=d\Psi_k(z_0)[Z]$ and determine  the unique $\mu$ that solves \eqref{ode} with that $h$, so that $Z-\mu Y\in \ker d\Psi_k(z_0)$). 
	By the Banach implicit function theorem, $\Psi_k^{-1}(0)=\mathcal N^{k,*}_{p,\alpha}$ is an embedded smooth
	Banach submanifold of $C^{k,*}_{p,\alpha}$.
	Finally, observing that the map
	\[
	H_k:C^{k,*}_{p,\alpha}\to C^{k-1}([a,b]),\qquad H_k(z)(t):=g\big(\dot z(t),V_0(z(t))\big)
	\]
	is continuous, on $\mathcal N^{k,*}_{p,\alpha}$ it never vanishes,  hence has constant sign on $[a,b]$. Therefore
	\[
	\mathcal L^{k,+}_{p,\alpha}=\mathcal N^{k,*}_{p,\alpha}\cap\{H_k<0\},\qquad
	\mathcal L^{k,-}_{p,\alpha}=\mathcal N^{k,*}_{p,\alpha}\cap\{H_k>0\}
	\]
	are open (and closed) in $\mathcal N^{k,*}_{p,\alpha}$, yielding the stated decomposition.

 For the smooth case ($k=\infty$), we notice that for each $k\ge 1$ we have Banach manifolds $C^{k,*}_{p,\alpha}$ with dense smooth inclusions
$i_{k+1,k}:C^{k+1,*}_{p,\alpha}\hookrightarrow C^{k,*}_{p,\alpha}$ and
$C^{\infty,*}_{p,\alpha}=\bigcap_{k\ge 1}C^{k,*}_{p,\alpha}$ (as a Fr\'echet manifold).
Moreover, the maps $\Psi_k$ satisfy the restriction property:
\[
\Psi_{k+1}=\Psi_k\circ i_{k+1,k},\qquad \Psi_k:C^{k,*}_{p,\alpha}\to C^{k-1}([a,b]).
\]
Fix $z_0\in\mathcal N^{\infty,*}_{p,\alpha}:=\Psi_\infty^{-1}(0)$.
In the proof of surjectivity of $d\Psi_k(z_0)$ one chooses a field $Y$ along $z_0$ with
$g(\dot z_0,Y)\equiv 1$ and $Y(b)\in T_{z_0(b)}\alpha$, and defines a right inverse
$R_k:C^{k-1}\to T_{z_0}C^{k,*}_{p,\alpha}$ by $R_k(h)=\mu(h)\,Y$, where $\mu(h)$ is the
solution of \eqref{ode}. If $h\in C^{k}$, then $\mu(h)\in C^{k+1}$, hence $R_{k+1}(h)=R_k(h)\in T_{z_0}C^{k+1,*}_{p,\alpha}$.
Consequently the splittings used in the Banach implicit function theorem can be chosen
compatibly with the inclusions $i_{k+1,k}$.
Applying the implicit function theorem at each level $k$ with these compatible splittings,
one obtains local submanifold charts for $\mathcal N^{k,*}_{p,\alpha}=\Psi_k^{-1}(0)$
that are compatible under restriction, i.e. the chart at level $k+1$ is the restriction
of the chart at level $k$ to the $C^{k+1}$-domain. Therefore the intersection
\[
\mathcal N^{\infty,*}_{p,\alpha}=\bigcap_{k\ge 1}\mathcal N^{k,*}_{p,\alpha}
\]
inherits a smooth Fr\'echet submanifold structure in $C^{\infty,*}_{p,\alpha}$.
\end{proof}	

\subsection{The Fermat principle in $\mathcal L^{k,\pm}_{p,\alpha}$}\label{Fermatsmooth}
 Having established the manifold structure for the sets $\mathcal L^{k,\pm}_{p,\alpha}$, $k\in (\mathbb N\setminus \{0\})\cup\{\infty\}$, we can obtain the proof of the Fermat principle as follows. 

 First notice that the arrival time functional in \eqref{T}, now defined on $\mathcal L^{k,\pm}_{p,\alpha}$, is smooth since it is the composition of the endpoint evaluation map 
\[
\mathrm{ev}_{b}:C^k([a,b],M)\to  M,\qquad \mathrm{ev}_{b}(z)=z(b),
\]
restricted to $\mathcal L^{k,\pm}_{p,\alpha}$, with the inverse map of the smooth timelike curve $\alpha$, i.e. $\alpha^{-1}:\mathrm{Im}(\alpha)\to I_\alpha$, which is smooth as well  (recall Remark~\ref{nonempty}). 

 Now, for each $\gamma\in \mathcal L^{k,\pm}_{p,\alpha}$ and $Z\in T_\gamma C^{k,*}_{p,\alpha}$, we have 
\[Z\in T_\gamma \mathcal L^{k,\pm}_{p,\alpha} \quad \Leftrightarrow\quad  d\Psi(\gamma)[Z]=0 \quad \Leftrightarrow \quad  g\left(\dot \gamma(t),\frac{D}{dt}Z(t)\right)=0,\]
i.e. $Z\in T_\gamma C^{k,*}_{p,\alpha}$ is also  tangent to $\mathcal L^{k,\pm}_{p,\alpha}$ if and only if \eqref{vareq} is satisfied.

 Thus, we can repeat the reasoning in Subsection~\ref{theend} if, for any  $C^k$ vector field $W$ along $\gamma$ with vanishing endpoints and for the  $C^k$ parallel vector field $U$ along $\gamma$ with  $U(b)=\dot\alpha(T(\gamma))$ we can find  a $C^k$ function $f_W:[a,b]\to \R$  with $f_W(a)=0$ such that $Z=W+f_WU$ belongs to $T_\gamma \mathcal L^{k,\pm}_{p,\alpha}$.
Since $U$ is timelike and $\dot\gamma$ is lightlike, one has
\[
g\bigl(U(t),\dot\gamma(t)\bigr)\neq 0,\qquad \forall\,t\in[a,b];
\]
then using $\tfrac{D}{dt}U=0$ we get 
\[
g\!\left(\dot\gamma,\frac{D}{dt}Z\right)
=
g\!\left(\dot\gamma,\frac{D}{dt}W\right)+\dot f_W\,g(U,\dot\gamma).
\]
Therefore \eqref{vareq} holds if and only if 
\begin{equation*}
	\dot f_W(t)=-\frac{g\!\left(\frac{D}{dt}W(t),\dot\gamma(t)\right)}{g\!\left(U(t),\dot\gamma(t)\right)},
	\qquad f_W(a)=0,
\end{equation*}
i.e., if and only if \eqref{eqf} is satisfied  (notice that since $W$, $U$ and $\gamma$ are $C^k$, $f_W$ is a $C^k$ function as well).   Thus,  if $\gamma\in \mathcal L^{k,\pm}_{p,\alpha}$ is a critical point of $T$ then 
\[f_W(b)=	-\int_a^b \frac{g(\frac{DW}{ds},\dot\gamma)}{g(U,\dot\gamma)} ds=0,\]
implying that $\gamma\in \mathcal L^{k,\pm}_{p,\alpha}$ must be a pregeodesic.

 The converse, i.e. any  $\gamma\in\mathcal L^{k,\pm}_{p,\alpha}$ which is a pregeodesic is a critical point of the arrival time functional $T:\mathcal L^{k,\pm}_{p,\alpha}\to \R$ defined in \eqref{T}, follows because,  as in the proof of (i) implies (ii) of \cite[Lemma 3]{Per90},  if $\gamma$ is a lightlike pregeodesic and $Z\in T_\gamma \mathcal L^{k,\pm}_{p,\alpha}$ then $Z(b)=0$. Thus,  from \eqref{Zb}, we conclude that $dT (\gamma)[Z]=0$.
\begin{rem}
The $C^k$ setting is well suited to
local bifurcation arguments as shown in  \cite{GGP04}.
On the other hand, for global variational existence results the $C^k$-topology is often less convenient, because it is
less compatible with the weak compactness  needed in infinite dimensions; in that context one
typically works in Sobolev spaces  where Palais--Smale condition and weak compactness  might be
available under some assumptions on the symmetry and/or the causal structure of the underlying spacetime that also allow us to circumvent the lack of the manifold structure on the path spaces of lightlike curves.
\end{rem}

\section{Bypassing the lack of the  structure of a Sobolev manifold: reductions and quadratic functionals.}\label{bypassing}
Even though the full path space of lightlike curves endowed with the $W^{1,p}$ (and in particular $H^1$) topology does not carry a  smooth manifold structure,  in several geometrically relevant settings one can still set up a workable variational theory by reducing the problem to  better functionals and/or ``reduced''  path spaces (see, e.g.,   the functional-analytic viewpoint in \cite{FGM95,GGM09}).

\subsection{Conformally stationary spacetimes}\label{conformal}  The class of conformally stationary spacetimes in the context of the Fermat principle attracted a wide  interest since the beginning. First recall that, as lightlike pregeodesics are invariant by conformal changes in the metric, we can assume that  the metric is stationary. As we said above, this case was already studied by Levi-Civita and Synge in the 1920's \cite{LeCi18,syn25}. And it was also considered by V. Perlick in a second part of his work about the general Fermat principle in spacetimes \cite{Per90ii}. 

The crucial point about stationary spacetimes is that  the arrival time can be computed up to a constant as the length of the projection into a hypersurface computed with a Finsler metric of Randers type.
More precisely, assume that $(M,g)$ admits a decomposition $M\simeq \Sigma\times\R$ with a distinguished time coordinate $t$ associated with the flow of the timelike Killing vector field and spacelike slices $\Sigma_t\simeq \Sigma$, then future-pointing lightlike curves $\gamma=(x,t)$ can be encoded by their spatial projections $x$ onto $\Sigma$. In fact, solving the null condition $g(\dot\gamma,\dot\gamma)=0$ with respect to $\dot t$ along future-pointing curves yields a Randers-type Finsler metric $F$ on $\Sigma$ (called {\em Fermat metric} in  \cite{CJM11},  see also \cite{CJS10}), and the arrival time becomes, up to constants, the $F$-length $\int F(x,\dot x)\,ds$ of the projected curve. This was already observed in \cite{MaPi98}.

Anyway, the length functional is invariant under reparametrization and therefore not coercive. 
To obtain a suitable functional, one can get inspiration from the case of Riemannian geodesics, where one considers the energy, namely, the integral of the square of the norm of  the velocity of the curve:
\[
E_F(x)=\frac12\int_0^1 F^2\bigl(x(s),\dot x(s)\bigr)\,ds,
\]
defined on a standard Hilbert manifold of curves on the slice  $\Sigma$ with fixed endpoints. After dealing with the limited regularity at $\dot x=0$ and ensuring the appropriate completeness properties, $E_F$ becomes the sharp functional for applying both Lusternik--Schnirelmann and Morse theories to lightlike geodesics in stationary spacetimes (see \cite{CJM11, CaJaMa10}). Actually $E_F$ is just $C^1$ in $H^1$-topology (see e.g. \cite{Cap10}) which is sufficient for Lusternik--Schnirelmann category arguments  aimed at multiplicity results.
This $C^{1}$-regularity is usually \emph{not} the right framework for the classical $C^{2}$-based Morse theory, but in the  stationary cases one can still recover Morse-type information by proving the Morse lemma  and the needed deformation lemma result (see \cite{CaJaMa10, CaJaMa13}).

\subsection{Stably causal spacetimes and quadratic Fermat principle}
The main problem with the arrival time functional is that in general it cannot be expressed as the integral of some function. But when the spacetime is stably causal, i.e. it admits  a globally defined temporal function, namely a smooth function $T:M\rightarrow \R$ with  timelike gradient (so strictly increasing along future-pointing w.r.t. the opposite  time orientation defined by $\nabla T$), we can parametrize the arrival curve using $T$ in such a way that $T(\alpha(s))=s$,  and the arrival  time function can be computed up to the addition of the constant $T(\gamma(a))$ as
\[ {\mathfrak L}(\gamma)=\int_a^b \frac{d}{ds} T(\gamma(s))ds.\]
Moreover, 
\[{\mathfrak L}(\gamma)=\int_a^b g(\dot\gamma(s),\nabla T_{\gamma(s)})ds.\]
As in the case of the energy functional $E_F$ above, the quadratic Fermat principle, first published in \cite{AP96}, considers the above functional taking the square of the integrand:
\[Q(\gamma)=\int_a^b g(\dot\gamma(s),\nabla T_{\gamma(s)})^2ds\]
and defined on one of the manifolds $\mathcal L^{2,\pm}_{p,\alpha}$ in Subsection~\ref{submani}.
\subsubsection{Proof of the quadratic Fermat principle}  The original Fermat principle can be very intuitive from the causality viewpoint. This is because at least when we consider lightlike geodesics which lie in the boundary of the causal future of the departing point, then it is clear that they must be the first curves to arrive into the observer. The situation with the quadratic Fermat principle is very different and the intuition provided by causality vanishes in this case. The proof was first accomplished in \cite{AP96} by F. Antonacci and P. Piccione and the most difficult part is to prove that if $\gamma$ is a critical point, then $g(\dot\gamma,\nabla T)$ is constant along $\gamma$. To prove that this is the case, we compute the differential of $Q$ in a variation by lightlike curves $\gamma_w$ with $w\in (-\varepsilon,\varepsilon)\subset \R$ and  $Z=\frac{d}{dw}\gamma_w|_{w=0}$. Using that $g(\dot\gamma,\nabla T)=\frac{d}{ds} T(\gamma(s))$ and $g(Z,\nabla T)=\frac{d}{dw} T(\gamma_w(s))$ and changing the order of the partial derivatives with respect to $s$ and $w$, one obtains that
\[dQ_\gamma(Z)=2\int_a^b g(\nabla T,\dot\gamma) \frac{d}{ds}g(\nabla T,Z)ds.\]
Then by applying integration by parts, as $Z(a)=0$,
\begin{equation}\label{dQZ}
	dQ_\gamma(Z)= 2g(\nabla T,\dot\gamma(b)) g(\nabla T,Z(b)) -2\int_a^b  g(\nabla T,Z)  \frac{d}{ds}g(\nabla T,\dot\gamma)ds.
	\end{equation}
Now consider $Z=W+f U$ as in Section \ref{theend}, namely, $W$ arbitrary with $W(a)=W(b)=0$, $U$ parallel along $\gamma$ with $U(b)$ tangent to the arrival curve $\alpha$ and 
\begin{equation*}
	f(t)=-\int_a^t \frac{g(\frac{DW}{ds},\dot\gamma)}{g(U,\dot\gamma)} ds.
\end{equation*}
By replacing $Z$ by $W+fU$, the first term to the right in \eqref{dQZ} can be expressed by applying integration by parts as $\int_a^b g(W, \frac{D}{ds}(\phi  \dot\gamma))ds$ for a certain function $\phi$, while the second term gives rise to the following two terms
\begin{multline*}
	-2\int_a^b  \frac{d}{ds}g(\nabla T,\dot\gamma) g(\nabla T,W)ds\\
	-2\int_a^b  \frac{d}{ds}g(\nabla T,\dot\gamma)\left( \int_a^s \frac{g(\frac{DW}{dr},\dot\gamma)}{g(U,\dot\gamma)} dr \right) g(\nabla T,U)ds.
	\end{multline*}
In the second term above, we can apply first a change in the order of integrals with respect to $r$ and $s$ and then integration by parts to obtain an expression of the type
\[\int_a^b g(W, \frac{D}{dr}(\psi \dot\gamma))dr.\]
Putting all together, we have  obtained  that 
\begin{multline*}\label{dQZ2}
	dQ_\gamma(Z)=\int_a^b g(W, \frac{D}{ds}(\phi \dot\gamma))ds-2\int_a^b g( \frac{d}{ds}g(\nabla T,\dot\gamma) \nabla T,W)ds  \\+\int_a^b g(W, \frac{D}{ds}(\psi \dot\gamma))ds.
	\end{multline*}
	It turns out that $\gamma$ satisfies that
	\begin{equation}\label{EL}
		\frac{D}{dt}(\phi \dot\gamma)-2\frac{d}{dt}g(\nabla T,\dot\gamma) \nabla T+\frac{D}{dt}(\psi \dot\gamma)=0.
		\end{equation}
	Now observe that 
	\begin{align*}
		g(\frac{D}{dt}(\phi \dot\gamma),\dot\gamma)&=\frac{d\phi}{dt} g(\dot\gamma,\dot\gamma)+\phi g(\frac{D}{dt}\dot\gamma,\dot\gamma)\\
 &=		\frac{d\phi }{dt} g(\dot\gamma,\dot\gamma)+ \frac 1 2 \phi\frac{d}{dt} g(\dot\gamma,\dot\gamma)=0,
		\end{align*}
		using that $\gamma$ is lightlike. Analogously, we have that $g(\frac{D}{dt}(\psi \dot\gamma),\dot\gamma)=0$, and therefore, multiplying \eqref{EL} by $\dot\gamma$ we conclude that
		\[-2\frac{d}{dt}g(\nabla T,\dot\gamma) g(\nabla T,\dot\gamma)=0,\]
		or equivalently
		\[\frac{d}{dt}g(\nabla T,\dot\gamma)^2=0, \]
		which implies that $g(\nabla T,\dot\gamma)$ is constant along $\gamma$, as we wanted to prove. Once we have proved that the integrand of $Q$ must be constant along critical points, it is straightforward to check that the Euler-Lagrange equations of $Q$ and $F$ are the same and, as a consequence, the critical curves of $Q$ are lightlike pregeodesics parametrized with $g(\nabla T,\dot\gamma)$ constant.
\subsubsection{Applications of the quadratic Fermat principle} The quadratic functional $Q$ has been used extensively to obtain multiplicity results and Morse-theoretic information for lightlike geodesics; see for instance \cite{GMP97,GMP98}. A common strategy is to work on suitable classes of timelike curves as approximating spaces, and then relate critical points indices to the lightlike limit.
However, as indicated by an example in the Appendix of \cite{GGM09},  in Sobolev frameworks, one still faces  regularity obstructions: the natural sets of future-pointing  curves defined by timelike constraints need not carry a Hilbert manifold structure  and the associated functionals may fail to have the differentiability properties required for a direct application of the classical infinite-dimensional Morse theory, as already the stationary case described in Subsection~\ref{conformal} clearly shows.

A complementary route goes back to Uhlenbeck \cite{Uhlenb75}, who developed a Morse theory for Lorentzian geodesics on globally hyperbolic Lorentzian  manifolds   using piecewise-smooth curves, thus avoiding the need for a  manifold structure on the constrained path space. In the context of gravitational lensing, this philosophy is exploited effectively in \cite{GMP02}, where a quite general version of the Fermat principle is proved on the set of $W^{1,\infty}(I,M)$ lightlike curves connecting a point to a timelike curve $\alpha$ and whose velocity norm (with respect to an auxiliary Riemannian metric) has essential infimum bounded  away from zero (see \cite[Th.~3.9]{GMP02}) (recall also  Lemma~\ref{notlocclosedinfty}).  Since the authors do not assume stable causality, their Fermat principle is closer in spirit to Section~\ref{Fermat}. This Fermat principle is then combined with a curve-shortening scheme to derive Morse relations for lightlike geodesics joining an event to a smooth timelike curve (also allowing for the presence of a boundary), under a compactness assumption for the arrival time functional called \emph{pseudocoercivity}, introduced in the study of geodesic connectedness of stationary Lorentzian manifolds in \cite{GP99}.

\section{Extensions of the Fermat principle}

\subsection{Extended sources and receivers}
There have been quite a few extensions of the Fermat principle in GR, but the first one was published by V. Perlick and P. Piccione in \cite{PP98}. To consider an extended source is something quite natural in the same way as one considers the problem of minimizing the distance between one submanifold and a point in a Riemannian manifold. The question of considering an extended receiver is more subtle as one needs to define an arrival time function  to state the Fermat principle. This problem was solved in \cite{PP98} by considering as a receiver a timelike submanifold endowed with a temporal function. Then the evaluation of the temporal function in the arrival point of the lightlike curve $\gamma$ defines the arrival time to establish the Fermat principle.

\subsection{Timelike geodesics with a prescribed proper time}
Lightlike geodesics can be viewed as causal geodesics with zero proper time. This naturally leads to the question: what characterizes causal geodesics with a fixed proper time  $T_0>0$?
Is it possible to generalize the Fermat principle to this case? The answer is affirmative as   shown first in \cite{GMP98b}  and it can be easily recovered using the classical Fermat principle for lightlike geodesics in a spacetime of higher dimensions   as done in \cite[\S 4.3]{CJM11}.  Given a Lorentzian manifold $(M,g)$, one can apply the Fermat principle to $(M\times \R,g+dr^2)$. We consider  lightlike curves  between an event $(p,0)$ and a  timelike curve $[0,1]\ni s\rightarrow (\alpha(s), T_0)$. Then a lightlike curve $[0,1]\ni s\rightarrow \beta(s)=(\gamma(s), r(s))\in M\times \R$ satisfies that $g(\dot\gamma,\dot\gamma)+\dot r^2=0$.  If $\gamma$ is timelike, then $\dot r$ cannot be zero, and as $r(0)=0$ and $r(1)=T_0>0$, we deduce that $\dot r>0$.  
Furthermore,
\[\ell_g(\gamma)=\int_0^1\sqrt{-g(\dot\gamma(s),\dot\gamma(s))}ds =\int_0^1 \dot r(s)ds = T_0,\]
in other words, the proper time of $\gamma$ is $T_0$.  Moreover, observe that as $\partial_r$ is a Killing field, then $r(s)$ is an affine function for the  geodesics of $(M\times \R,g+dr^2)$.  Thus,  a lightlike geodesic $s\rightarrow (\gamma(s),r(s))$ from $(0,p)$ to the curve $(\alpha(s),T_0)$ must have $\gamma$ timelike with $\dot r>0$ constant.  As a consequence, the Fermat principle for lightlike geodesics in $(M\times \R,g+dr^2)$ between an event $(p,0)$ and a timelike curve
$[0,1]\ni s\rightarrow (\alpha(s), T_0)$  gives rise to a Fermat principle for timelike geodesics with prescribed proper time, namely, the timelike geodesics with fixed proper time $T_0$ are the critical points of the arrival time function between an event $p$ and a timelike curve $\alpha$ between all the timelike curves with fixed proper time $T_0$. 

\subsection{Arbitrary arrival curve} Traditionally,  the Fermat principle has been formulated by taking a timelike curve as the arrival set. This makes sense because in that case the curve can be interpreted as an observer and the arrival time as the proper time of the observer. However,  there are situations in which it is natural to consider spacelike curves for the arrival set,  for example in  spacetimes endowed with a Killing field  $K$ that is not necessarily timelike. In such a case, the orbits of the Killing field are natural candidates for the arrival curve,  and when  $K$ is lightlike or spacelike the classical Fermat principle does not apply. This setup motivated the generalization of the classical Fermat principle to arbitrary arrival curves in \cite{CJS14}. In this case,  one must exclude the situation in which the lightlike geodesic is orthogonal to the arrival curve. This cannot happen when the arrival curve is timelike since  timelike vectors cannot be orthogonal to lightlike ones in a Lorentzian manifold. If the  lightlike geodesic is not orthogonal to the arrival curve, then it is a critical point of the arrival time among all the lightlike curves.  We also must observe that in this case the term ``arrival time'' cannot be interpreted as the proper time of  the arrival curve   since  proper time is undefined for non-timelike curves. Moreover, in  the case of a Lorentzian manifold with a Killing field, lightlike geodesics satisfy the Fermat principle considering as arrival curve the orbit of the Killing field even when the lightlike geodesic is orthogonal to the orbit (see \cite[Th. 7.4]{CJS14}).

\subsection{Finsler spacetimes and cone structures} From a geometric viewpoint, Finsler spacetimes are one of the most natural generalizations of classical Lorentzian spacetimes, adding the possibility of local anisotropy in the spacetime.  Taking into account that causal notions and geodesics can be directly extended to a Finsler setup, the Fermat principle can be formulated word for word in Finsler spacetimes. The proof follows the same steps as the classical one, with some additional technicalities, and it was first published  in \cite{Per08}. 
An extension of the Fermat principle in stationary spacetimes described in Subsection~\ref{conformal} to an analogous class of Finsler spacetimes   was obtained in \cite{CapSta18}.
A formulation 
for  both timelike and lightlike curves in 
time-oriented Finsler spacetimes was established in \cite{GPV12}, working on manifolds analogous to $\mathcal L^{\infty,+}_{p,\alpha}$,  but  encompassing the possibility of a fixed timelike  constraint  on a given interval of parametrization.    In \cite{GPV12} the authors also compute the second variation of the arrival time functional and state an extension of the  Morse Index Theorem for timelike geodesics from 
Lorentzian geometry to the Finslerian framework.

  As for   the extension to (strong) cone structures (see \cite[Def. 2.7]{JS20}), it is straightforward,   since   any cone structure admits a Lorentz-Finsler metric  with the cone structures as its lightlike cones (see \cite[Cor. 5.8]{JS20}) and lightlike geodesics are determined for the cone structure up to reparametrization (see \cite[Th. 6.6]{JS20}).

\subsection{Snell law in Finsler spacetimes}  As we have said above, light trajectories are determined by GR in smooth settings. This was one of the departing points to formulate the theory of general relativity. But nothing was said about the case in that the metric of the spacetime has a discontinuity. It is reasonable to assume that light rays are given by a Fermat principle, which in a certain sense it is equivalent to the generalized Huygen's principle. This was studied in \cite{JMPS25} for Finsler spacetimes obtaining a generalization of Snell law.  For simplicity, let us state the result for classical spacetimes. Assume that we have a manifold $M$ with a hypersurface $\eta$, the interface, that divides $M$ into two connected parts $M_1$ and $M_2$, and in each one of these connected parts we have respectively, two  Lorentzian metrics $g_1$ and $g_2$ which extend to $\eta$ but do not match (there is a discontinuity).  Then if $\gamma_1:[a,t_0]\rightarrow M_1$ is an incident light ray from $(M_1,g_1)$ (a lightlike geodesic) that intersects $\eta$ at the instant $t_0\in I$, 
the refracting trajectory is a lightlike geodesic $\gamma_2:[t_0,b]\rightarrow M_2$ of $(M_2,g_2)$ with $\gamma_1(t_0)=\gamma_2(t_0)=q$ and 
 \[\dot\gamma_2^{\perp_{g_2}}\cap T_{q}\eta=\dot\gamma_1^{\perp_{g_1}}\cap T_{q}\eta.\]
 Even if this expression does not look like the classical Snell law, using a static spacetime, one can recover the classical Snell law from it (see \cite[Example 7.5]{JMPS25}).

\section*{Acknowledgments}

\noindent  E.Caponio is  partially supported   by  MUR under the Programme ``Department of Excellence'' Legge 232/2016  (Grant No. CUP - D93C23000100001) and by ``INdAM - GNAMPA Project''  CUP E53C25002010001.\\
M.A. Javaloyes is  partially supported  by    Project PID2021-124157NB-I00, funded by MCIN/AEI/10.13039/501100011033/ and ``ERDF A way of making Europe"; and by ``Ayudas a proyectos para el desarrollo de investigaci\'{o}n cient\'{i}fica y t\'{e}cnica por grupos competitivos (Comunidad Aut\'{o}noma de la Regi\'{o}n de Murcia)'', included in ``Programa Regional de Fomento de la Investigaci\'{o}n Cient\'{i}fica y T\'{e}cnica (Plan de Actuaci\'{o}n 2022)'' of Fundaci\'{o}n S\'{e}neca-Agencia de Ciencia y Tecnolog\'{i}a de la Regi\'{o}n de Murcia, REF. 21899/PI/22.

%

%

\end{document}